\newif\ifsubmit
\newcommand{\EZ}[1]{{#1}} 
\newcommand{\PB}[1]{{#1}}
\newcommand{\DA}[1]{{#1}} 
\newcommand{\DAComm}[1]{} 
\newcommand{\EZComm}[1]{} 
\newcommand{\FDComm}[1]{}
\newcommand{\PBComm}[1]{}
\newcommand{\EZ}[1]{\textcolor{blue}{#1}} 
\newcommand{\PB}[1]{\textcolor{purple}{#1}} 
\newcommand{\DA}[1]{\textcolor{red}{#1}} 
\newcommand{\DAComm}[1]{{\scriptsize\textcolor{red}{[\bf{Davide: }#1}]}}
\newcommand{\EZComm}[1]{{\scriptsize\textcolor{blue}{[\bf{Elena: }#1}]}}
\newcommand{\FDComm}[1]{{\scriptsize\textcolor{orange}{[\bf{Francesco: }#1}]}}
\newcommand{\PBComm}[1]{{\scriptsize\textcolor{purple}{[\bf{Pietro: }#1}]}}
\newcommand{\refToFigure}[1]{Fig.~\ref{fig:#1}}
\newcommand{\refToSection}[1]{Sect.~\ref{sect:#1}}
\newcommand{\refToTheorem}[1]{Theorem~\ref{theo:#1}}
\newcommand{\refToLemma}[1]{Lemma~\ref{lemma:#1}}
\newcommand{\refToRule}[1]{{\small \textsc{(#1)}}}
\newcommand{\Space}{\hskip 0.8em}
\newcommand{\BigSpace}{\hskip 1.5em}
\newcommand{\dom}[1]{\mathit{dom}(#1)}
\newcommand{\Tuple}[1]    {({#1})}
\newcommand{\Pair}[2]     {\Tuple{{#1},{#2}}}
\newcommand{\fv}[1]{\textit{fv}(#1)}
\newcommand{\vars}[1]{\textit{vars}(#1)}
\newcommand{\NamedRule}[4]{{\tiny\textsc{({#1})}}\displaystyle\frac{#2}{#3}\ \begin{array}{l} #4 \end{array}}
\newcommand{\NamedRuleSimple}[3]{{\tiny\textsc{({#1})}}\displaystyle\frac{#2}{#3}}
\newenvironment{grammatica}{$\begin{array}[t]{llll}}{\end{array}$}
\newcommand{\produzione}[3]{#1&{:}{:}=&#2 & \mbox{{\small{#3}}}}
\newcommand{\terminale}[1]{\texttt{#1}}
\newcommand{\nonterminale}[1]{\textit{#1}}
\newcommand{\fd}{\nonterminale{fd}}
\newcommand{\f}{\nonterminale{f}}
\newcommand{\undef}{\nonterminale{undef}}
\newcommand{\g}{\nonterminale{g}}
\newcommand{\fdBar}{\overline{\fd}}
\newcommand{\x}{\nonterminale{x}} 
\newcommand{\y}{\nonterminale{y}} 
\newcommand{\z}{\nonterminale{z}} 
\newcommand{\xBar}{\overline{\x}}
\newcommand{\E}{\nonterminale{e}}
\newcommand{\EBar}{\overline{\E}}
\newcommand{\s}{\nonterminale{se}}
\newcommand{\n}{\nonterminale{ne}}
\newcommand{\be}{\nonterminale{be}}
\newcommand{\true}{\texttt{true}}
\newcommand{\false}{\texttt{false}}
\newcommand{\cons}[2]{{#1}\mathbin{\terminale{:}}{#2}}
\newcommand{\tail}[1]{{#1}{\char`\^}}
\newcommand{\IfThenElse}[3] {{\terminale{if}\mathrel{#1}\terminale{then}}\mathrel{#2}{\terminale{else}\mathrel{#3}}}
\newcommand{\PW}[3]{#1{#2}#3}
\newcommand{\nop}{\nonterminale{op}} 
\newcommand{\pw}[1]{[{#1}]}
\newcommand{\pwop}{\pw{\nop}}
\newcommand{\Caps}[2]{\Pair{#1}{#2}}
\newcommand{\call}{\mathit{c}}
\newcommand{\val}{\nonterminale{v}}
\newcommand{\vBar}{\overline{\val}}
\newcommand{\callEnv}{\tau}
\newcommand{\emptyMap}{\ensuremath{\emptyset}}
\newcommand{\mapEnv}{\rho}
\newcommand{\mapEnvPrime}{{\mapEnv'}}
\newcommand{\mapEnvU}{{\widehat{\mapEnv}}}
\newcommand{\sv}{\nonterminale{s}}
\newcommand{\nv}{\nonterminale{n}}
\newcommand{\bv}{\nonterminale{b}}
\newcommand{\opsem}[5]{{#1,#2,#3}\!\Downarrow\!\Pair{#4}{#5}}
\newcommand{\bigmapUnion}[2]{\bigsqcup_{#1} #2}
\newcommand{\Subst}[3]   {#1 [#2/#3]}
\newcommand{\Update}[3]   {#1 \{#2\mapsto#3\}}
\newcommand{\bisim}[3]{\EZ{ #1 \approx_{#3} #2}} 
\newcommand{\uptobisim}[2]{#1_{\approx #2}}
\newcommand{\At}[4]{\atfun_#1(#2,#3)=#4}
\newcommand{\atfun}{\mathit{at}}
\newcommand{\wf}{\mathit{wd}}
\newcommand{\indx}{i}
\newcommand{\jndx}{j}
\newcommand{\stream}{\sigma} 
\newcommand{\eval}[2]{\sem{#1}#2}
\newcommand{\subst}{\theta}
\newcommand{\sol}{\textsf{sol}}
\newcommand{\AppMap}[2]{ #1[#2] }
\newcommand{\sem}[1]{{\llbracket#1\rrbracket}}
\newcommand{\WF}[2]{\mathsf{wd}(#1,\mapEnv,#2)}
\newcommand{\WFUpdated}[3]{\mathsf{wd}(#1,#2,#3)}
\newcommand{\map}{\mathit{m}}
\newcommand{\incrMap}{\map^{+1}}
\newcommand{\decrMap}{\map^{-1}}
\newcommand{\AT}[2]{\atfun_\mapEnv(#1,#2)}
\newcommand{\premise}{\vdash}
\newcommand{\premisestar}{\vdash^\star}
\newcommand{\premisestarfirst}[1]{\vdash^\star_{#1}}
\newcommand{\varset}{\mathcal{X}}
\begin{document}

\title{Enhanced Regular Corecursion\\
 for Data Streams}
%
%
\author{Davide Ancona \and Pietro Barbieri \and Elena Zucca}
\authorrunning{D. Ancona et al.}
%
\institute{DIBRIS, University of Genova
} 
\maketitle              
\begin{abstract}
  We propose a simple calculus for processing \emph{data streams} (infinite flows of data series), represented by finite sets of equations built on stream operators. Furthermore, functions defining streams are \emph{regularly corecursive}, that is, cyclic calls are detected, avoiding non-termination as happens with ordinary recursion in the call-by-value evaluation strategy. As we illustrate by several examples, the combination of such two mechanisms provides a good compromise between expressive power and decidability.
Notably, we provide an algorithm to check that the stream returned by a function call is represented by a \emph{well-formed set of equations} which actually admits a unique solution, hence access to an arbitrary element of the returned stream will never diverge.
\keywords{Operational semantics, stream programming, regular terms.}
\end{abstract}
\section{Introduction}\label{sect:intro}

Applications often deal with data structures which are conceptually infinite, among those \emph{data streams} ({infinite flows of data series}) are a mainstream example: as we venture deeper into the Internet of Things (IoT) era, stream processing is becoming increasingly important. Indeed, all
  main IoT platforms provide embedded and integrated engines for real time analysis of potentially infinite flowing data series; such a process occurs before the data is
  stored for efficiency and, as often happens in Computer Science, there is a trade-off between the expressive power of the language, 
  the efficiency of its implementation and the decidability of properties important  to guarantee reliability and tractability.

Another related important problem is data stream generation, which is essential to test complex distributed IoT systems;
  the deterministic simulation of sensor data streams through a suitable language offers a practical solution to IoT testing and favors
  early detection of some kinds of bugs that can be fixed more easily before the deployment of the whole system.
  

A well-established solution {to data stream generation and processing} is \emph{lazy evaluation}, as supported, e.g., in Haskell,
{and most stream libraries offered by mainstream languages, as \lstinline{java.util.stream}}. In this approach, conceptually infinite
data streams are the result of a function or method call, which is evaluated {according to the call-by-need strategy}.
For instance, {in Haskell} we can define \lstinline{one_two = 1:2:one_two}, or even represent the list of natural numbers {as \lstinline{from 0}}, where \lstinline{from n = n:from(n+1)}.
{However, such a great expressive power comes at a cost; let us consider, for instance, the definition \lstinline{bad_stream = 0:tail bad_stream}.
  The Haskell compiler does not complain about this definition, and no problem arises at runtime as long as the manipulation of \lstinline{bad_stream} requires only its first element to be accessed; anyway, any operation which needs to inspect \lstinline{bad_stream} at a deeper level is deemed to diverge.
  Unfortunately, it is not decidable to check, even at runtime, whether the stream returned by a Haskell function is \emph{\DA{well-defined}},
  that is, all of its elements can be computed\footnote{\DA{This is what is also known as a productive corecursive definition \cite{Coquand93}.}}; {indeed,} the full expressive power of Haskell can be used
  to define streams by means of recursive functions. For similar reasons, it is not decidable to check at runtime whether the streams returned by two Haskell functions are equal.}

More recently, a complementary approach has been considered in different programming paradigms --- functional \cite{Jeannin17}, logic \cite{SimonBMG07,Ancona13,DagninoAZ20}, and object-oriented \cite{AnconaBDZ20} --- based on the following two ideas:
\begin{itemize}
\item Infinite streams can be finitely represented by \emph{finite sets of equations} involving only the stream constructor, e.g., $\x = 1 : 2 : \x$.
  \DA{Such a representation corresponds to what has been called by Courcelle in its seminal paper \cite{Courcelle83} a \emph{regular}, a.k.a. \emph{rational},
    tree}, that is, a tree with possibly infinite depth but a finite set of subtrees.
\item Functions are \emph{regularly corecursive}, that is, execution keeps track of pending function calls, so that, when the same call is considered the second time, this is detected, avoiding non-termination {as happens with ordinary recursion in the call-by-value evaluation strategy}. 
\end{itemize} 
In this way, {the Haskell stream  \lstinline{one_two} can be equivalently {obtained} by} the call\footnote{Differently from Haskell, for simplicity in our calculus functions {are uncurried, hence they take as arguments possibly empty tuples, delimited by parentheses.}}  \lstinline{one_two()}, {with the function \lstinline{one_two}} defined by \lstinline{one_two() = 1:2:one_two()}.
Indeed, with regular corecursion the result of this call is the value corresponding to the unique solution of the equation $\x = 1 : 2 : \x$.
  On the other hand, since the expressive power is limited to regular streams, it is not possible to define a corecursive function whose call
  returns the stream of natural numbers, as happens for the \lstinline{from 0} Haskell example. However, \DA{there exist procedures for
  checking well-defined streams and their equality, even with tractable algorithms}.
 
In this paper, we propose a simple calculus of {numeric} streams which supports {regular corecursion and goes beyond regular streams
  by extending equations with other typical stream operators besides the stream constructor: tail and pointwise operators can be contained in stream equations and are therefore not evaluated.

In this way, we {are able to} achieve a good compromise between expressive power and {decidability}. Notably:
\begin{itemize}
\item {the extended {shape} of equations allows {the definition of functions} which return non-regular streams; for instance, it is possible to {obtain} the stream of natural numbers as {\lstinline!from(0)!}, {by defining} \lstinline{from(n)=n:(from(n)[+]repeat(1))}, with \lstinline{[+]} the pointwise addition on numeric streams and \lstinline{repeat} the function defined by
\lstinline{repeat(n)=n:repeat(n)};}

\item {there exists a decidable procedure to dynamically check whether the stream returned by a corecursive function is \DA{well-defined}}; 

\item {however, it is not possible to express} \emph{all} streams computable with the lazy evaluation approach, but only those which have a specific structure (that is, can be expressed as the unique solution of a set of equations built with the above mentioned operators).
\end{itemize}

In \refToSection{streamSem} we formally define the calculus, in \refToSection{examples} we show many interesting examples, and in \refToSection{wf} we provide an operational characterization of \DA{well-defined streams}, proved to be a sufficient and necessary condition for an access to an arbitrary index to never diverge.
In \refToSection{conclu} we discuss related and further work. The Appendix contains more examples of derivations.

\section{Stream calculus}\label{sect:streamSem}
 \refToFigure{stream-syntax} shows the syntax of the calculus. 

\begin{figure}
\begin{small}
\begin{grammatica}
\produzione{\fdBar}{\fd_1\ldots\fd_n}{program}\\
\produzione{\fd}{\f(\xBar) = \s}{function declaration}\\
\produzione{\E}{\s \mid \n \mid \be}{expression}\\
\produzione{\s}{\x \mid \IfThenElse{\be}{\s_1}{\s_2} \mid \cons{\n}{\s} \mid \tail{\s} \mid \PW{\s_1}{\pwop}{\s_2} \mid \f(\EBar)}{stream expression}\\
\produzione{\n}{\x \mid \s(\n) \mid \n_1\mathop{\nop}\n_2 \mid 0 \mid 1 \mid 2 \mid ...}{numeric expression}\\
\produzione{\be}{\x \mid \true \mid \false \mid ...}{boolean expression}\\
\produzione{\nop}{+\ \mid\ -\ \mid\ *\ \mid\ /}{numeric operation}
\end{grammatica}
\end{small}
\caption{Stream calculus:  syntax}
\label{fig:stream-syntax}
\end{figure}

A program is a sequence of (mutually recursive) function declarations, for simplicity assumed to only return streams. Stream expressions are variables, conditional expressions, expressions built by stream operators, and function calls. We consider the following stream operators: constructor (prepending a numeric element), tail, and pointwise {arithmetic} operations. Numeric expressions  include the access to the $i$-th\footnote{{For simplicity, here indexing and numeric expressions coincide, even though indexes are expected to be natural numbers, while values in streams can range over a larger numeric domain}.}} element of a stream. {We use $\fdBar$ to denote a sequence $\fd_1, \dots, \fd_n$ of function declarations, and analogously for other sequences.}

The operational semantics, given in \refToFigure{stream-sem}, is based on two key ideas:
\begin{enumerate}
\item (some) infinite streams are represented in a finite way
\item evaluation keeps trace of already considered function calls 
\end{enumerate}

\begin{figure}
\begin{small}
\begin{grammatica}
\produzione{\call}{\f(\vBar)}{{(evaluated)} call}\\
\produzione{\val}{\sv \mid \nv \mid \bv}{value}\\
\produzione{\sv}{\x \mid \cons{\nv}{\sv} \mid \tail{\sv} \mid \PW{\sv_1}{\pwop}{\sv_2} }{{(open)} stream value}\\
\produzione{{\indx, \nv}}{0 \mid 1 \mid 2 \mid ...}{{index, numeric value}}\\
\produzione{\bv}{\true \mid \false}{boolean value}\\
\produzione{\callEnv}{\call_1\mapsto\x_1\ \ldots\ \call_n\mapsto\x_n \Space (n\geq0)}{call trace}\\
\produzione{\mapEnv}{{\x_1\mapsto\sv_1 \ldots \x_n\mapsto\sv_n} \Space (n\geq0)}{environment}\\
\end{grammatica}
\\[2ex]

\hrule 

$\begin{array}{l}
  \\
\NamedRule{val}{}{\opsem{\val}{\mapEnv}{\callEnv}{\val}{\mapEnv}}{}
\Space
\NamedRule{if-t}{
\opsem{\be}{\mapEnv}{\callEnv}{\true}{\mapEnv} \Space
\opsem{\s_1}{{\mapEnv}}{\callEnv}{\sv}{\mapEnv'}
}{\opsem{\IfThenElse{{\be}}{\s_1}{\s_2}}{\mapEnv}{\callEnv}{\sv}{{\mapEnv'}}
}{}
\Space
\NamedRule{if-f}{
\opsem{\be}{\mapEnv}{\callEnv}{\false}{{\mapEnv}} \Space
\opsem{\s_2}{{\mapEnv}}{\callEnv}{\sv}{{\mapEnv'}}
}{\opsem{\IfThenElse{{\be}}{\s_1}{\s_2}}{\mapEnv}{\callEnv}{\sv}{{\mapEnv'}}
}{}
\\[6ex]
\NamedRule{cons}{
\opsem{\n}{\mapEnv}{\callEnv}{\nv}{\mapEnv}\Space
\opsem{\s}{\mapEnv}{\callEnv}{\sv}{\mapEnv'}}{
\opsem{\cons{\n}{\s}}{\mapEnv}{\callEnv}{\cons{\nv}{\sv}}{\mapEnv'}}
{} \BigSpace
\NamedRule{tail}{
\opsem{\s}{\mapEnv}{\callEnv}{\sv}{\mapEnv'}}{
\opsem{\tail{\s}}{\mapEnv}{\callEnv}{\tail{\sv}}{\mapEnv'}}
{}
\BigSpace
\NamedRule{pw}{
\opsem{\s_1}{\mapEnv}{\callEnv}{\sv_1}{\mapEnv_1}\Space
\opsem{\s_2}{\mapEnv}{\callEnv}{\sv_2}{\mapEnv_2}}{
\opsem{\PW{\s_1}{\pwop}{\s_2}}{\mapEnv}{\callEnv}{\PW{\sv_1}{\pwop}{\sv_2}}{\mapEnv_1\sqcup\mapEnv_2}}
{}
\\[6ex]
{\NamedRule{args}{  \begin{array}{l}
    \opsem{\E_i}{\mapEnv}{\callEnv}{\val_i}{{\mapEnv_i}}\Space \forall i \in 1..n
    \BigSpace
    \opsem{\f(\vBar)}{{\mapEnvU}}{\callEnv}{\sv}{\mapEnvPrime}
  \end{array}
}{ \opsem{\f(\EBar)}{\mapEnv}{\callEnv}{\sv}{\mapEnvPrime}}
{ \EBar=\E_1,\ldots,\E_n\ \mbox{not of shape}\ \vBar\\
\vBar=\val_1,\ldots,\val_n\\
\mapEnvU = \bigmapUnion{i \in 1..n}{\mapEnv_i}}}

\\[6ex]
{\NamedRule{invk}{  \begin{array}{l}
    \opsem{\Subst{\s}{\vBar}{\xBar}}{\mapEnv}{\Update{\callEnv}{\f(\vBar)}{\x}}{\sv}{\mapEnvPrime}
  \end{array}
}{ \opsem{\f(\vBar)}{\mapEnv}{\callEnv}{{\x}}{\Update{\mapEnvPrime}{\x}{\sv}}}
{ {\f(\vBar)\not\in\dom{\uptobisim{\callEnv}{\mapEnv}}}\\
 \x\ \mbox{fresh}\\ 
\mathit{fbody}(\f)=\Pair{\xBar}{\s}\\
\wf(\mapEnvPrime,\x,\sv)}}
\BigSpace
{\NamedRule{corec}{
}{ \opsem{\f(\vBar)}{\mapEnv}{\callEnv}{\x}{\mapEnv}}  
{
\uptobisim{\callEnv}{\mapEnv}({f(\vBar)})=\x\\
}}
\\[8ex]
\NamedRule{at}{
  \opsem{\s}{\mapEnv}{\callEnv}{\sv}{{\mapEnv'}}\Space  
  \opsem{\n}{\mapEnv}{\callEnv}{\indx}{\mapEnv}\Space
}{ \opsem{\s(\n)}{\mapEnv}{\callEnv}{\nv}{ {\mapEnv} } }
{  \At{\mapEnvPrime}{\sv}{\indx}{\nv}}
\\[6ex]
\hline
\\
\NamedRule{at-var}
{\At{\mapEnv}{\mapEnv(\x)}{\indx}{\nv'}}
{\At{\mapEnv}{\x}{\indx}{\nv' }}
{  }
\BigSpace
\NamedRule{at-cons-0}
{}
{\At{\mapEnv}{\cons{\nv}{\sv}}{0}{\nv }}
{  }
\BigSpace
\NamedRule{at-cons-n}
{\At{\mapEnv}{\sv}{\indx-1}{\nv'}}
{\At{\mapEnv}{\cons{\nv}{\sv}}{\indx}{\nv'}}
{\indx>0}
\\[6ex]
\NamedRule{at-tail}
{\At{\mapEnv}{\sv}{\indx+1}{\nv}}
{\At{\mapEnv}{\tail{\sv}}{\indx}{\nv}}
{}
\BigSpace
\NamedRule{{at-pw}}
{\At{\mapEnv}{\sv_1}{\indx}{\nv_1}\Space{\At{\mapEnv}{\sv_2}{\indx}{\nv_2}}}
{\At{\mapEnv}{\PW{\sv_1}{\pwop}{\sv_2}}{\indx}{\nv_1\mathbin{\nop}\nv_2}}
{}

\end{array}$
\end{small}
\caption{Stream calculus: operational semantics}\label{fig:stream-sem}
\end{figure}

To obtain (1), our approach is inspired by \emph{capsules} \cite{JeanninK12}, which are essentially expressions supporting cyclic references. That is, the \emph{result} of the evaluation of a stream expression is a pair $\Pair{\sv}{\mapEnv}$, where $\sv$ is an \emph{(open) stream value}, built on top of stream variables, numeric values, the stream constructor, the tail destructor and the pointwise arithmetic operators, and $\mapEnv$ is an \emph{environment} mapping a finite set of variables into stream values.
In this way, cyclic streams can be obtained: for instance, $\Pair{\x}{\x\mapsto\cons{\nv}{\x}}$ denotes the stream constantly equal to 
 $\nv$.
 
We denote by $\vars{\mapEnv}$ the set of variables occurring in $\mapEnv$, by $\fv{\mapEnv}$ the set of its free variables, that is, $\vars{\mapEnv}\setminus\dom{\mapEnv}$, and say that $\mapEnv$ is \emph{closed} if $\fv{\mapEnv}=\emptyset$, \emph{open} otherwise, and analogously for a result $\Pair{\val}{\mapEnv}$.

%

To obtain {point (2) above}, evaluation has an additional parameter which is a \emph{call trace}, a map from function calls where arguments are values  (dubbed \emph{calls} for short in the following) into variables. 

Altogether, the semantic judgment has shape $\opsem{\E}{\mapEnv}{\callEnv}{\val}{\mapEnvPrime}$, {where 
$\E$ is the expression to be evaluated, $\mapEnv$ the current environment defining possibly cyclic stream values that can occur in $\E$,
  $\callEnv$  the call trace, and $\Pair{\val}{\mapEnvPrime}$ the result.}
  The semantic judgments should be indexed by an underlying (fixed) program, 
 omitted for sake of simplicity. Rules use the following auxiliary definitions:
\begin{itemize}
\item $\mapEnv\sqcup\mapEnv'$ is the union of two environments, which is well-defined if they have disjoint domains; $\Update{\mapEnv}{\x}{\sv}$ is the environment which gives $\sv$ on $\x$, coincides with $\mapEnv$ elsewhere; we use analogous notations for call traces.
\item $\Subst{\s}{\vBar}{\xBar}$ is obtained by {parallel substitution of} variables $\xBar$ {with} values $\vBar$.
\item $\mathit{fbody}(\f)$ returns the pair of the parameters and the body of the declaration of $\f$, if any, {in the assumed program}.
\end{itemize}
Moreover, the rules are parametric in the following other judgments, for which different definitions will be discussed {in \refToSection{wf}:}
{\begin{itemize}
\item $\wf(\mapEnv,\x,\sv)$, that is, by adding the association $\x\mapsto\sv$ to the (\DA{well-defined}) environment $\mapEnv$,  we still get a \DA{well-defined} environment.
\item $\bisim{\val}{\val'}{\mapEnv}$, that is, the two values are equivalent in the given environment\footnote{This equivalence is assumed to be the identity on numeric and boolean values.}. Then, $\uptobisim{\callEnv}{\mapEnv}$ is \DA{the extension of $\callEnv$ \emph{up to {equivalence}} in $\mapEnv$:} $\uptobisim{\callEnv}{\mapEnv}({f(\val_1,\ldots,\val_n)})=\x$ iff there exist $\val'_1,\ldots,\val'_n$ such that
$\callEnv(f(\val'_1,\ldots,\val'_n))=\x$ and $\bisim{\val_i}{\val'_i}{\mapEnv}$ for all $i\in1..n$.
\end{itemize}}

Intuitively, a closed result $\Pair{\sv}{\mapEnv}$ is \DA{well-defined} if it denotes a unique stream (infinite sequence of numeric values), and a closed environment $\mapEnv$ is \DA{well-defined} if, for each $\x\in\dom{\mapEnv}$, $\Pair{\x}{\mapEnv}$ is \DA{well-defined}. In other words, the corresponding set of equations admits a unique solution. For instance, the {environment
  ${\{\x\mapsto\x\}}$} is not \DA{well-defined}, since it is undetermined (any stream satisfies the equation {$\x=\x$}); the {environment $\{\x\mapsto\x[+]\y,\y\mapsto\cons{1}{\y}\}$} is not \DA{well-defined} as well, since it is undefined ({the two equations $\x=\x\mapsto\x[+]\y,\y=\cons{1}{\y}$ admit no solutions for $x$}). Finally, two stream values $\sv$ and $\sv'$ such that the results $\Pair{\sv}{\mapEnv}$ and $\Pair{\sv'}{\mapEnv}$ are closed and \DA{well-defined} are equivalent if they denote the same stream.

These notions can be generalized to open results and environments, assuming that free variables denote unique streams, as will be formalized in \refToSection{wf}. 

Rules for values and conditional are straightforward. In rules \refToRule{{cons}}, \refToRule{tail} and \refToRule{pw}, arguments are evaluated,
\DA{while} the stream operator is applied without any further evaluation; \DA{the fact that the tail and pointwise operators are treated
as the stream constructor $\cons{\_}{\_}$ is crucial to get results which denote non-regular streams as shown in \refToSection{examples}. However,
when non-constructors are allowed to occur in values, ensuring well-defined results become more challenging, because
the usual simple syntactic constraints that can be safely used for constructors \cite{Coquand93} no longer work (see more details in \refToSection{wf} and
\ref{sect:conclu}).}

The rules for function call are based on a mechanism of cycle detection, similar to that in \cite{AnconaBDZ20}. They are given in a modular way. That is, evaluation of arguments is handled by a separate rule \refToRule{args}, whereas the following two rules handle (evaluated) calls.

Rule \refToRule{invk} is applied when a call is considered for the first time, as expressed by the first side condition. The body is retrieved by using the auxiliary function \textit{fbody}, and {evaluated in  a call trace where the call has been mapped into a fresh variable. Then, it is checked that adding  the association from such variable to the result of the evaluation of the body keeps the environment \DA{well-defined}.
If the check succeeds, then the final result consists of the variable associated with the call and the updated environment. For simplicity, here execution is stuck if the check fails; an implementation should raise a runtime error instead.

Rule \refToRule{corec} is applied when a  call is considered for the second time, as expressed by the first side condition (note that cycle detection takes place up to equivalence in the environment). The variable $\x$ is returned as result. {However, there is no associated value in the environment yet; in other words, the {result $\Pair{\x}{\mapEnv}$ is open} at this point. } This means that $\x$ is undefined until the environment is updated with the corresponding value in rule \refToRule{invk}. However, $\x$ can be safely used as long as the evaluation does not require $\x$ to be inspected; for instance, $\x$ can be safely passed as an argument to a function call.

For instance, if we consider the program \lstinline!f()=g()  g()=1:f()!, then the judgment
  $\opsem{\mbox{\lstinline{f()}}}{\emptyMap}{\emptyMap}{\x}{\mapEnv}$, with $\mapEnv=\{\x\mapsto\y,\y\mapsto \cons{1}{\x}\}$,  is derivable;
  however, while the final result $\Pair{\x}{\mapEnv}$ is closed, the derivation contains also judgments with open results, as happens for
$\opsem{\mbox{\lstinline{f()}}}{\emptyMap}{\{\mbox{\lstinline{f()}}\mapsto\x,\mbox{\lstinline{g()}}\mapsto\y\}}{\x}{\emptyMap}$ and $\opsem{\mbox{\lstinline{g()}}}{\emptyMap}{\{\mbox{\lstinline{f()}}\mapsto\x\}}{\y}{\{\y\mapsto\cons{1}{\x}\}}$. For the full derivation, see \refToFigure{derivation1} in the appendix.

As another example, if we consider the program \lstinline!f()=g(2:f())  g(s)=1:s!, then the derivation of the judgment
$\opsem{\mbox{\lstinline{f()}}}{\emptyMap}{\emptyMap}{\x}{\mapEnv}$ with $\mapEnv=\{\x\mapsto\y,\y\mapsto \cons{1}{\cons{2}{\x}}\}$
  is built on top of the derivation of $\opsem{\mbox{\lstinline{g(2:}\x\lstinline{)}}}{\emptyMap}{\{\mbox{\lstinline{f()}}\mapsto\x\}}{\y}{\{\y\mapsto\cons{1}{\cons{2}{\x}}\}}$, corresponding to the evaluation of \lstinline{g(2:}\x\lstinline{)} where $\x$ is an operand of the stream constructor
whose result is passed as argument to the call to \lstinline{g}, despite $\x$ is not defined yet. For the full derivation, see \refToFigure{derivation2} in the appendix.

Finally, rule \refToRule{at} computes the $\indx$-th element of a stream expression. After evaluation of the arguments, the numeric result is obtained by the auxiliary judgment $\At{\mapEnv}{\sv}{\indx}{\nv}$, inductively defined in the bottom part of the figure.  If the stream value is a variable (\refToRule{at-var}), then the evaluation is propagated to the associated stream value in the environment, if any. If, instead, the variable is free in the environment, then execution is stuck; again, an implementation should raise a runtime error instead. 
\refToFigure{stuck_derivation} in the appendix shows an example of stuck derivation. 
If the stream value is built by the constructor, then the result is the first element of the stream if the index is $0$ (\refToRule{at-cons-0}); otherwise, the {evaluation} is recursively {propagated to} its tail
{with} the predecessor {index} (\refToRule{at-cons-n}). Conversely, if the stream is built by the tail operator (\refToRule{at-tail}), then
the {evaluation} is recursively {propagated to} the stream argument {with} the successor {index}. Finally, if the stream is built by a pointwise operation (\refToRule{at-pw}), then {the evaluation is recursively propagated to the operands with the same index and then the corresponding arithmetic operation
is computed on the results}.

\section{Examples}\label{sect:examples}
First we show some simple examples, to explain how regular corecursion works. Then we provide some more significant examples.

Consider the following function declarations:
\begin{lstlisting}
repeat(n) = n:repeat(n) 
one_two() = 1:two_one()
two_one() = 2:one_two()
\end{lstlisting}
With the standard semantics of recursion, the calls, e.g., \lstinline{repeat(0)} and \lstinline{one_two()} lead to non-termination. Thanks to regular corecursion, instead, these calls terminate, producing as result $\mathtt{\Caps{\x}{\{\x\mapsto\cons{0}{\x}\}}}$, and $\mathtt{\Caps{\x}{\{\x\mapsto\cons{1}{\y},\y\mapsto\cons{2}{\x}\}}}$, respectively. Indeed, when initially invoked, the call  \lstinline{repeat(0)} is added in the call trace with an associated fresh variable, say $\x$. In this way, when evaluating the body of the function, the recursive call is detected as cyclic, the variable $\x$ is returned as its result, and, finally, the stream value $\mathtt{\cons{0}{\x}}$ is associated in the environment with the result $\x$ of the initial call. The evaluation of \lstinline{one_two()} is analogous, except that another fresh variable $\y$ is generated for the intermediate call \lstinline{two_one()}. The formal derivations are given below.

\begin{small}
$$\begin{array}{l}
\NamedRuleSimple{invk}{
  \NamedRuleSimple{cons}{
  \NamedRuleSimple{value}{}{}\BigSpace
    \NamedRuleSimple{corec}{
}{\opsem{\mathtt{repeat(0)}}{\emptyset}{\{\mathtt{repeat(0)}\mapsto\x\}}{\x}{\emptyset}}
}{\opsem{\mathtt{\cons{0}{repeat(0)}}}{\emptyset}{\{\mathtt{repeat(0)}\mapsto\x\}}{\cons{0}{\x}}{\emptyset}}}
{\opsem{\mathtt{repeat(0)}}{\emptyset}{\emptyset}{\x}{\{\x\mapsto\cons{0}{\x}\}}}
\end{array}$$

$$\begin{array}{l}
\NamedRuleSimple{invk}{
  \NamedRuleSimple{cons}{
  \NamedRuleSimple{value}{}{}\BigSpace
\NamedRuleSimple{invk}{
  \NamedRuleSimple{cons}{
  \NamedRuleSimple{value}{}{}\BigSpace
\NamedRuleSimple{corec}{
}{\opsem{\mathtt{one\_two()}}{\emptyset}{\{\mathtt{one\_two()}\mapsto\x,\ \mathtt{two\_one()}\mapsto\y\}}{\x}{\emptyset}}
}{\opsem{\mathtt{2:one\_two()}}{\emptyset}{\{\mathtt{one\_two()}\mapsto\x,\ \mathtt{two\_one()}\mapsto\y\}}{2:\x}{\emptyset}}
}{\opsem{\mathtt{two\_one()}}{\emptyset}{\{\mathtt{one\_two()}\mapsto\x\}}{\y}{\{\y\mapsto2:\x\}}}
}{\opsem{\mathtt{1:two\_one()}}{\emptyset}{\{\mathtt{one\_two()}\mapsto\x\}}{1:\y}{\{\y\mapsto2:\x\}}}}
                {\opsem{\mathtt{one\_two()}}{\emptyset}{\emptyset}{\x}{\{\x \mapsto 1:\y,\ \y\mapsto2:\x\}}}
\end{array}$$
\end{small}

For space reasons, we did not report the application of rule \refToRule{value}. In both derivations, note that rule \refToRule{corec} is applied, without evaluating the body once more, when the cyclic call is detected. 

The following examples show function definitions {whose calls return} non-regular streams, notably, the natural numbers, {the natural numbers raised to the power of a number, the factorials, the powers of a  number,} the Fibonacci numbers, and the stream obtained by pointwise increment by one.
\begin{lstlisting}  
nat() = 0:(nat()[+]repeat(1))
nat_to_pow(n) =                  //nat_to_pow(n)(i)=i^n 
  if n <= 0 then repeat(1) else nat_to_pow(n-1)[*]nat()
fact() = 1:((nat()[+]repeat(1))[*]fact())
pow(n) = 1:(repeat(n)[*]pow(n)) //pow(n)(i)=n^i
fib() = 0:1:(fib()[+]fib()^)
incr(s) = s[+]repeat(1)
\end{lstlisting}

The definition of \lstinline{nat} uses regular corecursion, since the recursive call \lstinline{nat()} is cyclic. Hence the call \lstinline{nat()} returns $\Caps{\x}{\{\x\mapsto\cons{0}{(\x[+]\y)}, \y\mapsto\cons{1}{\y}\}}$.
{The definition of \lstinline{nat_to_pow} is} a standard inductive one where the argument strictly decreases in the recursive call. Hence, the call, e.g., \lstinline{nat_to_pow(2)}, returns\\
\centerline{
$\Caps{\x_2}{\{\x_2\mapsto\x_1[*]\x,\x_1\mapsto\x_0[*]\x, \x_0\mapsto\y, \y\mapsto\cons{1}{\y}, \x\mapsto\cons{0}{(\x[+]\y')}, \y'\mapsto\cons{1}{\y'}\}}.$}
  The definitions of \lstinline{fact}, \lstinline{pow}, and \lstinline{fib} are regularly corecursive. For instance, the call \lstinline{fact()} returns
  $\Caps{\z}{\z\mapsto(\x[+]\y)[*]z, \x\mapsto\cons{0}{(\x[+]\y'}), \y\mapsto 1:\y, \y'\mapsto 1:\y'}$. 
 The definition of \lstinline{incr} is non-recursive, hence always converges{, and the call \lstinline{incr(}$\sv$\lstinline{)} returns $\Caps{\x}{\{\x\mapsto \sv[+]\y, \y\mapsto\cons{1}{\y}\}}$}.
The following alternative definition
\begin{lstlisting} 
incr_reg(s) = (s(0)+1):incr_reg(s^) 
\end{lstlisting}
relies, instead, on regular corecursion. Note the difference: the latter version ensures termination only for
  regular streams, {as in  \lstinline{incr_reg(one_two())}, since, eventually, in the recursive call, the expression \lstinline{s^} turns out to denote the initial stream}; however, the computation does not
  terminate for non-regular streams, as in \lstinline{incr_reg(nat())}, which, however, converges with \lstinline{incr}. 
  
The following function computes the stream of partial sums of the first $i+1$ elements of a stream $s$, that is,
  \lstinline!sum($s$)($\indx$)$=\sum_{k=0}^{\indx}s(k)$!:
\begin{lstlisting}
sum(s) = s(0):(s^[+]sum(s))
\end{lstlisting}
Such a function is useful for computing streams whose elements approximate a series with increasing precision;
for instance, the following function returns the stream of partial sums of the first $i+1$ elements of the Taylor series of the exponential function:
\begin{lstlisting}
sum_expn(n) = sum(pow(n)[/]fact())
\end{lstlisting}
Function \lstinline{sum_expn} calls \lstinline{sum} with the argument \lstinline{pow(n)[/]fact()} corresponding to the stream of all terms of the
Taylor series of the exponential function;  hence, by accessing the $\indx$-th element of the stream, we have the following approximation of the series:}
\begin{quote}
\texttt{sum\_expn(}$\nv$\texttt{)(}$\indx$\texttt{)}$=\sum_{k=0}^{\indx} \frac{{\nv}^k}{k!} = 1+\nv+\frac{{\nv}^2}{2!}+\frac{{\nv}^3}{3!}+\frac{{\nv}^4}{4!}+\cdots+\frac{\nv^{\indx}}{\indx!}$
\end{quote}
Lastly, we present a couple of examples showing how it is possible to define primitive operations provided
  by IoT platforms for real time analysis of data streams; we start with \lstinline{aggr(n,s)}, which allows
  aggregation (by addition) of contiguous data in the stream \lstinline{s} w.r.t. a frame of length \lstinline{n}:
\begin{lstlisting}
aggr(n,s) = if n<=0 then repeat(0) else s[+]aggr(n-1,s^) 
\end{lstlisting}

For instance, \lstinline!aggr(3,$\sv$)! returns the stream $\sv'$ s.t. $\sv'(\indx)=\sv(\indx)+\sv(\indx+1)+\sv(\indx+2)$.
On top of \lstinline{aggr},  we can easily define \lstinline{avg(n,s)}
to compute the stream of average values of \lstinline{s} in the frame of length \lstinline{n}:
\begin{lstlisting}
avg(n,s) = aggr(n,s)[/]repeat(n)  
\end{lstlisting}



\section{\DA{Well-defined environments and equivalent streams}}\label{sect:wf}
In the semantic rules, we have left unspecified two notions: \emph{\DA{well-defined}  environments}, and \emph{\DA{equivalent streams}}. 
We provide now a formal definition in abstract terms. Then, we provide an operational definition of \DA{well-defined environments}.

Semantically, a stream $\stream$ is an infinite sequence of numeric values{, that is, a function which returns, for each index $i\geq 0$, the $i$-th element $\stream(i)$.}
Given a result $\Caps{\sv}{\mapEnv}$, we get a stream by instantiating variables in $\sv$ with streams, in a way consistent with $\mapEnv$, and evaluating operators.
To make this formal, we need some preliminary definitions.

A \emph{substitution}  {$\subst$} is a function from a finite set of variables to streams. We denote by $\eval{\sv}{\subst}$ the  stream obtained by applying $\subst$ to $\sv$, and evaluating {operators}, as formally defined below.
\begin{quote}
$\eval{\x}{\subst} =\subst(\x)$\\[1ex]
$(\eval{\cons{\nv}{\sv}}{\subst})(i) = 
\begin{cases}
\nv & i=0 \\ (\eval{\sv}{\subst})(i-1) & i\geq 1 
\end{cases}$\\[1ex]
$(\eval{\tail{\sv}}{\subst})(i) = \eval{\sv}{\subst}(i+1)\BigSpace i\geq 0$\\[1ex]
$(\eval{\PW{\sv_1}{\pwop}{\sv_2}}{\subst})(i) = \eval{\sv_1}{\subst}(i) \mathbin{\nop} \eval{\sv_2}{\subst}(i)\BigSpace i\geq 0$
\end{quote}

Given an environment $\mapEnv$ and a substitution $\subst$ with domain $\vars{\mapEnv}$, the substitution  $\AppMap{\mapEnv}{\subst}$ is defined by: 
\begin{quote}
$\AppMap{\mapEnv}{\subst}(\x) = \begin{cases}
\eval{\mapEnv(\x)}{\subst} & \x \in \dom{\mapEnv} \\
\subst(x) & {\x\in \fv{\mapEnv}}
\end{cases}$
\end{quote}
Then, a \emph{solution} of $\mapEnv$ is a substitution $\subst$ \DA{with domain $\vars{\mapEnv}$} such that $\AppMap{\mapEnv}{\subst} = \subst$. 

A closed environment $\mapEnv$ is \emph{\DA{well-defined}} if it has exactly one solution, denoted $\sol(\mapEnv)$. 
For instance, ${\{\x\mapsto\cons{1}{\x}\}}$ and ${\{\y\mapsto\cons{0}{(\y [+] \x)},\ \x\mapsto1:\x\}}$ are \DA{well-defined}{, since their unique solutions map $\x$ to the infinite stream of ones, and $\y$ to the stream of natural numbers, respectively.} Instead, for ${\{\x\mapsto1[+]\x\}}$ there are no solutions. Lastly, an environment can be undetermined{:  for instance, a substitution mapping $\x$ into an arbitrary stream} is a solution of ${\{\x\mapsto\x\}}$.

{An open environment $\mapEnv$ is \DA{well-defined} if, for each $\subst$ with domain $\fv{\mapEnv}$, it has exactly one solution $\subst'$ such that $\subst\subseteq\subst'$. For instance, the open environment  $\{\y\mapsto\cons{0}{(\y [+] \x)}\}$ is \DA{well-defined}.} 

Given a {closed} result $\Caps{\sv}{\mapEnv}$, with $\mapEnv$ \DA{well-defined}, we define its semantics by  
$\sem{\sv}{\mapEnv} = \eval{\sv}{\subst}$ for  $\subst =\sol(\mapEnv)$. Then, two stream values $\sv$ and $\sv'$ are \emph{semantically equivalent in $\mapEnv$}   if $\sem{\sv}{\mapEnv}=\sem{\sv'}{\mapEnv'}$.

\DA{We now consider the non-trivial problem of ensuring that a closed environment $\mapEnv$ is well-defined; if environments would be allowed to contain only
  the stream constructor, then it would suffice to require all non-free variables to be \emph{guarded} by the stream constructor \cite{Coquand93}.
  For instance, the environment ${\{\x\mapsto\cons{1}{\x}\}}$ satisfies such a syntactic condition, and is well-defined, while 
  in the non well-defined environment ${\{\x\mapsto\x\}}$ variable $\x$ is not guarded by the constructor.}

\DA{However, when non constructors as
  the tail and pointwise operators come into play, the fact that variables are guarded by the stream constructor no longer ensures that the environment is well-defined; let us consider for instance $\mapEnv=\{\x\mapsto \cons{0}{\tail{\x}}\}$ corresponding to the definition of \lstinline{bad_stream}
  shown in \refToSection{intro}: $\mapEnv$ is not well-defined since it admits infinite solutions (all streams starting with 0), although variable $\x$
is guarded by the stream constructor.}
  
\DA{To ensure well-defined environments a more complex check is needed:} in \refToFigure{op-wf} we provide an operational characterization of \DA{well-defined environments}. 

\begin{figure}
\begin{small}
\begin{grammatica}
\produzione{\map}{{\x_1\mapsto\nv_1 \ldots \x_n\mapsto\nv_k} \Space (n\geq0)}{map from variables to natural numbers}
\end{grammatica}
\\
\hrule 
$\begin{array}{l}
  \\
\NamedRule{main}
{\WFUpdated{\x}{\Update{\mapEnv}{\x}{\val}}{\emptyMap}}
{\wf(\mapEnv,\x,\val)}
{}
\BigSpace    
\NamedRule{wf-var}
{\WF{\mapEnv(\x)}{\Update{\map}{\x}{0}}}
{\WF{\x}{\map} }
{\x\not\in\dom\map}
\\[5ex]
\NamedRule{wf-corec}
{}
{\WF{\x}{\map} }
{\x\in\dom\map\\
\map(x)>0}
\BigSpace
{\NamedRule{wf-fv}
{}
{\WF{\x}{\map} }
{\x\not\in\dom\mapEnv}}
\\[6ex]
\NamedRule{wf-cons}
{\WF{\sv}{{\incrMap}}}
{\WF{\cons{\nv}{\sv}}{\map}}
{}
\BigSpace
\NamedRule{wf-tail}
{\WF{\sv}{{\decrMap}}}
{\WF{\tail{\sv}}{\map}}
{}
\NamedRule{wf-pw}
{\WF{\sv_1}{\map}\Space\WF{\sv_2}{\map}}
{\WF{\PW{\sv_1}{\pwop}{\sv_2}}{\map}}
{}
\end{array}$
\end{small}
\caption{Operational definition of \DA{well-defined environments}}\label{fig:op-wf}
\end{figure}

{The judgment $\wf(\mapEnv,\x,\sv)$ used in the side condition of rule \refToRule{invk} holds if $\WFUpdated{\x}{\Update{\mapEnv}{\x}{\val}}{\emptyMap}$ holds. The judgment $\WF{\sv}{\emptyMap}$ means that a result is well-defined. That is, restricting the domain of $\mapEnv$ to the variables reachable from $\sv$ (that is, either occurring in $\sv$, or, transitively, in values associated with reachable variables) we get a \DA{well-defined} environment; thus, $\wf(\mapEnv,\x,\sv)$ holds if adding the association of $\sv$ with $\x$ preserves \DA{well-definedness} of $\mapEnv$.}

The additional argument $\map$ in the judgment $\WF{\sv}{\map}$ is a map from variables to natural numbers.  We write $\incrMap$ and $\decrMap$ for the maps $\{(\x,\map(\x)+1) \mid {\x\in\dom\map}\}$, and $\{(\x,\map(\x)-1) \mid \x\in\dom\map\}$, respectively.

In rule \refToRule{main}, this map is initially empty.
In rule \refToRule{wf-var}, a variable $\x$ {defined in the environment} is added in the map, with initial value $0$, the first time it is found.  
In rule \refToRule{wf-corec}, when it is found the second time, it is checked that more constructors than tail operators have been traversed. In rule \refToRule{wf-fv}, a free variable is considered \DA{well-defined}.\footnote{Indeed, \DA{non-well-definedness} can only be detected on closed results.} In rules \refToRule{wf-cons}, \refToRule{wf-tail}, and \refToRule{wf-pw}, the value associated with a variable is incremented/decremented by one each time a constructor and tail operator are traversed, respectively.

As an example of derivation of \DA{well-definedness} and access to the $\indx$-th element, in \refToFigure{ex1} we consider the result $\Caps{\x}{\{\x\mapsto \cons{0}{(\x\ [+]\ \y)},\y\mapsto \cons{1}{\y}\}}$, obtained by evaluating the call \lstinline{nat()} with \lstinline!nat! defined as in \refToSection{examples}.  
\begin{figure}
\begin{small}
$\begin{array}{l}
\NamedRule{wf-var}
{\NamedRule{wf-cons}
{\NamedRule{{wf-pw}}
{\NamedRule{wf-corec}
{}
{\WF{\x}{\{\x\mapsto1\}}}
{}
\BigSpace
\NamedRule{wf-var}
{\NamedRule{wf-cons}
{\NamedRule{wf-corec}
{}
{\WF{\y}{\{\x\mapsto1,\y\mapsto 1\}}}
{}}
{\WF{\cons{1}{\y}}{\{\x\mapsto1,\y\mapsto0\}}}
{}}
{\WF{\y}{\{\x\mapsto1\}}}
{}}
{\WF{\x\ [+]\ \y}{\{\x\mapsto1\}}}
{}}
{\WF{\cons{0}{(\x\ [+]\ \y)}}{\{\x\mapsto0\}}}
{}}
{\WF{\x}{\emptyset}}
{}
\end{array}$
\\[4ex]
$\begin{array}{l}
\NamedRule{at-var}
{\NamedRule{at-cons-n}
{\NamedRule{at-op}
{\NamedRule{at-var}
{\begin{array}{c}\NamedRule{at-cons-0}{}{\At{\mapEnv}{\cons{0}{(\x\ [+]\ \y)}}{0}{0}}{}\\\vdots\end{array}}
{\At{\mapEnv}{\x}{\indx-1}{\indx-1}}
{}
\BigSpace
\NamedRule{at-var}
{
\begin{array}{c}\NamedRule{at-cons-0}{}{\At{\mapEnv}{\cons{1}{\y}}{0}{1}}{}\\\vdots\end{array}
}
{\At{\mapEnv}{\y}{\indx-1}{1}}
{}}
{\At{\mapEnv}{\x\ [+]\ \y}{\indx-1}{\indx}}
{}}
{\At{\mapEnv}{\cons{0}{(\x\ [+]\ \y)}}{\indx}{\indx}}
{}}
{\At{\mapEnv}{\x}{\indx}{\indx}}
{}
\end{array}$
\end{small}
\caption{Derivations for $\mapEnv=\Caps{\x}{\{\x\mapsto \cons{0}{(\x\ [+]\ \y)},\ y\mapsto \cons{1}{\y}\}}$.}\label{fig:ex1}
\end{figure}

In \refToFigure{ex2} in the Appendix we consider 
a trickier example, that is, the result $\Caps{\x}{\{\x\mapsto \cons{0}{\cons{1}{\tail{(\cons{2}{\tail{\x}})}}}\}}$. Its semantics is the stream $0,1,1,1,\ldots$.  %

We show now that \DA{well-definedness} of a result is a necessary and sufficient condition for termination of access to an arbitrary index.
To formally express and prove this statement, we introduce some definitions and notations.

First of all, since  the numeric value obtained as result is not relevant for the following technical treatment, for simplicity we will write $\AT{\sv}{\indx}$ rather than $\At{\mapEnv}{\sv}{\indx}{\nv}$. We call \emph{derivation} an either finite or infinite proof tree.

We write $\WF{\sv'}{\map'}\premise\WF{\sv}{\map}$ to mean that $\WF{\sv'}{\map'}$ is a premise of a {(meta-)rule} where $\WF{\sv}{\map}$ is the consequence, and $\premisestar$ for the reflexive and transitive closure of this relation. Moreover, 
$\WF{\x}{\map'}\premisestarfirst{\varset}\WF{\sv}{\map}$, with $\x\not\in\varset$, means that in the path there can be nodes of shape $\WF{\y}{\_}$ only for $\y\in\varset$ and non-repeated. We use analogous notations for the judgment $\AT{\sv}{\indx}$. 

\begin{lemma}\label{lemma:basics}\
\begin{enumerate}
\item\label{iii} If $\AT{\x}{\indx'}\premisestarfirst{\varset}\AT{\sv}{\indx}$, then $\AT{\x}{\indx'+k}\premisestarfirst{\varset}\AT{\sv}{\indx+k}$, for each $k\geq 0$.
\item\label{i} A judgment $\WF{\sv}{\emptyset}$ has no derivation
iff the following condition holds:\\
\begin{tabular}{ll}
\refToRule{wf-stuck}&$\WF{\x}{\map'}\premisestarfirst{\varset'} \WF{\mapEnv(\x)}{\Update{\map}{\x}{0}}\premise\WF{\x}{\map}\premisestarfirst{\varset}\WF{\sv}{\emptyMap}$\\
&for some $\x\in\dom{\mapEnv}$, $\varset',\varset$, and $\map',\map$ s.t.\ $\x\not\in\dom{\map}, \map'(\x)=k\leq 0$.
\end{tabular}
\item\label{ii} The derivation of $\AT{\sv}{\jndx}$ 
is infinite iff the following condition holds:\\
\begin{tabular}{ll}
\refToRule{at-$\infty$}&$\AT{\x}{\indx+k}\premisestarfirst{\varset'}\AT{\mapEnv(\x)}{\indx}\premise\AT{\x}{\indx}\premisestarfirst{\varset}\AT{\sv}{\jndx}$\\
&for some $\x\in\dom{\mapEnv}$, $\varset',\varset$, and $\indx, k \geq 0$.
\end{tabular}
\end{enumerate}
\end{lemma}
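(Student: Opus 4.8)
\medskip\noindent\textbf{Proof plan.}
The three statements are largely independent, though the third relies on the first, and a common remark underlies all of them: both inference systems are \emph{essentially deterministic} — at most one rule applies to any judgement, the only branching being the two‑premise rules \refToRule{at-pw} and \refToRule{wf-pw} — so a judgement is (non‑)derivable according to whether its unique \emph{derivation attempt}, obtained by forcing the applicable rule at every node, is a genuine (finite or infinite) proof tree or else reaches a node where no rule fires. For the first statement we would argue by induction on the number of inference steps separating $\AT{\sv}{\indx}$ from $\AT{\x}{\indx'}$. The only rules that can occur on that path are \refToRule{at-var}, \refToRule{at-cons-n}, \refToRule{at-tail} and \refToRule{at-pw}; rule \refToRule{at-cons-0} is excluded because it is an axiom whereas the top node is the variable‑node $\AT{\x}{\indx'}$. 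Among these, only \refToRule{at-cons-n} has an index‑sensitive side condition, $\indx>0$. Hence adding $k$ to the index of every node on the path keeps all side conditions satisfied (indices only grow), leaves the rule applied at each step — and therefore the variable‑nodes and the set $\varset$ — unchanged, and produces a path witnessing $\AT{\x}{\indx'+k}\premisestarfirst{\varset}\AT{\sv}{\indx+k}$.

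\emph{Second statement.} First we would show every attempt for $\WF{\cdot}{\cdot}$ is finite: along any branch $\dom\map$ never shrinks, a variable of $\dom\mapEnv$ is inserted by \refToRule{wf-var} at most once, and between two insertions only \refToRule{wf-cons}, \refToRule{wf-tail}, \refToRule{wf-pw} — each replacing the stream value by a proper subterm — may fire; concretely, the lexicographic measure $(|\dom\mapEnv\setminus\dom\map|,\ \text{size of the stream value})$ strictly decreases at each step and the tree is finitely branching. So $\WF{\sv}{\emptyMap}$ fails to be derivable exactly when its attempt reaches a node with no applicable rule, and a case analysis shows the only such \emph{stuck} configuration is $\WF{\x}{\map}$ with $\x\in\dom\map$ and $\map(\x)\le 0$ (here $\x\in\dom\mapEnv$, since $\x$ can have entered $\map$ only through \refToRule{wf-var}, which reads $\mapEnv(\x)$). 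Reading the attempt from its root down to such a stuck leaf $\WF{\x}{\map'}$: as $\x\in\dom{\map'}$, $\x$ must have been inserted at a strictly earlier node $\WF{\x}{\map}$ with $\x\notin\dom\map$, whose premise is $\WF{\mapEnv(\x)}{\Update{\map}{\x}{0}}$; moreover any variable met twice on a branch produces a leaf (via \refToRule{wf-corec} or a stuck node), so the variable‑nodes strictly between $\WF{\mapEnv(\x)}{\Update{\map}{\x}{0}}$ and $\WF{\x}{\map'}$, as well as those below $\WF{\x}{\map}$, are non‑repeated. This is exactly condition \refToRule{wf-stuck}. For the converse, by determinism the path displayed in \refToRule{wf-stuck} is a branch of the attempt ending in $\WF{\x}{\map'}$ with $\map'(\x)\le 0$, which is stuck, so no derivation exists.

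\emph{Third statement.} For the ``if'' direction, given a path satisfying \refToRule{at-$\infty$} we shift the loop body $\AT{\mapEnv(\x)}{\indx}\premisestarfirst{\varset'}\AT{\x}{\indx+k}$ upward by $k,2k,\dots$ using the first statement, obtaining an infinite branch with $\AT{\x}{\indx+(n+1)k}\premisestar\AT{\x}{\indx+nk}$ for all $n\ge 0$ (when $k=0$ the branch simply revisits $\AT{\x}{\indx}$ forever); since it sits above $\AT{\sv}{\jndx}$, the attempt for $\AT{\sv}{\jndx}$ is infinite. For the ``only if'' direction, from an infinite derivation König's lemma (finite branching) yields an infinite branch; because every rule but \refToRule{at-var} replaces the stream value by a proper subterm and both \refToRule{at-cons-0} and a free variable stop a branch, this branch applies \refToRule{at-var} infinitely often, hence visits infinitely many variable‑nodes $\AT{\x_0}{\indx_0},\AT{\x_1}{\indx_1},\dots$ with $\x_j\in\dom\mapEnv$ and no variable‑node strictly between consecutive ones (the decomposition of the finite term $\mapEnv(\x_j)$ on that branch is a root‑to‑leaf path in its syntax tree, so its only variable‑node is the endpoint). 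As $\dom\mapEnv$ is finite some variable recurs infinitely often, and its index sequence, being a sequence of naturals, cannot strictly decrease forever, so two of its occurrences form a ``loop'' $\AT{\x}{\indx}$, $\AT{\x}{\indx+k}$ with $k\ge 0$; reading off the prefix from $\AT{\sv}{\jndx}$ together with the loop body then yields \refToRule{at-$\infty$}. The step I expect to be the main obstacle is exactly this last one: arranging that both the prefix and the loop body are free of repeated variables \emph{while} the net index shift of the loop stays $\ge 0$ — a naive choice (e.g.\ the first recurring variable on an arbitrary infinite branch) may close an index‑decreasing loop, or have a prefix that repeats a variable. The plan to overcome it is to invoke the shifting property once more: an index‑decreasing loop can be excised from the branch and the remainder re‑aligned by a uniform index shift, and after finitely many such surgeries — controlled by a suitable well‑founded measure on configurations — one reaches an infinite branch whose first recurring variable already closes a repetition‑free, index‑nondecreasing loop. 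Identifying that measure and checking the surgery preserves the branch is the technical heart of the argument.
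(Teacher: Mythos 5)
Your proposal tracks the paper's proof almost step for step. Part (1) is the paper's ``immediate by induction on the rules'' made explicit; part (2) is exactly the paper's argument (the rules are deterministic, every $\wf$-attempt is finite, the only stuck configuration is $\WF{\x}{\map}$ with $\x\in\dom{\map}$ and $\map(\x)\leq 0$, and reading the stuck branch off gives \refToRule{wf-stuck}), with your lexicographic measure $(|\dom{\mapEnv}\setminus\dom{\map}|,\ \text{size of } \sv)$ being a useful sharpening of the paper's one-line ``since $\vars{\sv}$ is a finite set, the derivation cannot be infinite''; and the ``if'' half of part (3) is the paper's use of part (1) to unfold the loop indefinitely. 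So there is no divergence of approach.

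The one step you leave open --- extracting from an infinite derivation a loop $\AT{\x}{\indx+k}\premisestarfirst{\varset'}\AT{\mapEnv(\x)}{\indx}$ that is simultaneously repetition-free, has $k\geq 0$, and is reached by a repetition-free prefix --- is precisely the step the paper itself dispatches by assertion: it says the infinite path has ``a (first) repeated variable with a greater or equal index'' and immediately displays the required shape. Your worry is legitimate: with $\mapEnv=\{\x\mapsto \cons{0}{\cons{0}{(\PW{\x}{\pwop}{\tail{\tail{\tail{\x}}}})}}\}$ the derivation of $\AT{\x}{\indx}$ is infinite, yet a branch may first close the decreasing loop $\AT{\x}{\indx}\premisestar\AT{\x}{\indx-2}$, and the first overall repetition need not be index-nondecreasing; moreover a nondecreasing repetition of $\x$ obtained from two \emph{non-consecutive} occurrences, or a prefix chosen carelessly, may contain repeated intermediate variables, which $\premisestarfirst{\varset}$ forbids. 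So this is a genuine gap --- but one you share with the paper rather than introduce. A concrete way to close it, somewhat tighter than your ``surgery'': consecutive variable nodes on a branch determine a finite weighted graph on $\dom{\mapEnv}$ (one edge per occurrence of a variable in each $\mapEnv(\y)$, weighted by tails minus constructors on the path to it); an infinite branch is an infinite walk whose partial sums are natural numbers, hence bounded below, hence it contains a closed walk of non-negative total weight, hence a \emph{simple} cycle of non-negative weight; part (1) and determinism then let you realize that cycle, anchored at the index actually reached by a shortest (hence repetition-free) prefix, as a path of the derivation, which is \refToRule{at-$\infty$}. The anchoring step still needs care, and you should be aware that the published proof does not contain the argument you are looking for.
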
 
\begin{proof}\
\begin{enumerate}
\item Immediate by induction on the rules.
\item For each $\WF{\sv}{\map}$ there is exactly one applicable rule, unless in the case $\WF{\x}{\map}$ with $\map(\x)\leq 0$. Since $\vars{\sv}$ is a finite set, the derivation cannot be infinite. 
Hence, there is no derivation for $\WF{\sv}{\emptyset}$  iff there is a finite path from $\WF{\sv}{\emptyset}$ of judgments on variables in $\dom{\mapEnv}$, and a (first) repeated variable, that is, of the shape below, where $\x\not\in\dom{\map}$ and $\map'(\x)\leq 0$.
\begin{small}
$$\WF{\x}{\map'}\WF{\x_n}{\_}\ldots\WF{\x_1}{\_}\WF{\x}{\map}\WF{\y_m}{\_}\ldots\WF{\y_1}{\_}\ldots\WF{\sv}{\emptyset}$$
\end{small}
That is, condition \refToRule{wf-stuck} holds, with $\varset'=\{\x_1,\ldots,\x_n\}$, and $\varset=\{\y_1,\ldots,\y_m\}$.
\item For each $\AT{\sv}{\indx}$ there is exactly one applicable rule, unless in the case $\AT{\x}{\indx}$ with $\x\not\in\dom{\mapEnv}$. Moreover, since $\mapEnv$ has finite domain, the derivation $\AT{\sv}{\jndx}$ is infinite iff there is an infinite path from $\AT{\sv}{\jndx}$ of judgments on variables in $\dom{\mapEnv}$, and a (first) repeated variable with a greater or equal index, hence, thanks to \refToLemma{basics}-(\ref{iii}), of the shape below, where $m,n,k\geq 0$:
\begin{small}
$$\ldots\AT{\x_n}{\indx_n+k}\ldots\AT{\x_1}{\indx_1+k}\AT{\x}{\indx+k}\AT{\x_n}{\indx_n}\ldots$$
$$\ldots\AT{\x_1}{\indx_1}\AT{\x}{\indx}\AT{\y_m}{\jndx_m}\ldots\AT{\y_1}{\jndx_1}\ldots\AT{\sv}{\jndx}$$
\end{small}
That is, condition \refToRule{at-$\infty$} holds, with $\varset'=\{\x_1,\ldots,\x_n\}$, and $\varset=\{\y_1,\ldots,\y_m\}$.
\end{enumerate}
\end{proof}

\begin{lemma}\label{lemma:second-occurrence} For $\x\in\dom{\map}$, the following conditions are equivalent:
\begin{enumerate}
\item $\AT{\x}{\indx'}\premisestarfirst{\varset}\AT{\sv}{\indx}$ for some $\indx',\indx$
\item $\WF{\x}{\map'}\premisestarfirst{\varset}\WF{\sv}{\map}$ for some $\map'$ such that $\map'(\x) = \map(\x)+\indx-\indx'$.
\end{enumerate}
\end{lemma}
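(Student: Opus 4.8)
The plan is to prove both implications by induction on the length of the path $\premisestarfirst{\varset}$, exploiting the fact that the rules for the judgment $\WF{\cdot}{\cdot}$ and the rules for $\AT{\cdot}{\cdot}$ are \emph{structurally parallel}: each step on a stream value $\sv$ of shape $\cons{\nv}{\sv'}$, $\tail{\sv'}$, $\PW{\sv_1}{\pwop}{\sv_2}$, or a variable $\y$ shifts the numeric index in exactly the same way as it shifts the map value $\map(\y)$. Concretely, traversing a constructor decrements the index by one (rule \refToRule{at-cons-n}) and increments every value in $\map$ by one (rule \refToRule{wf-cons}); traversing a tail increments the index by one (rule \refToRule{at-tail}) and decrements every value in $\map$ by one (rule \refToRule{wf-tail}); traversing a pointwise operator leaves both unchanged. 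So along any common path, the quantity $\map'(\x) - \map(\x)$ changes by exactly the \emph{opposite} of the change in the ``current index'', which gives the invariant $\map'(\x) = \map(\x) + \indx - \indx'$ relating the two judgments. The two conditions being equivalent amounts to saying that one path exists if and only if the other does, and that the arithmetic bookkeeping matches; since the side conditions that can block a step differ only in the degenerate cases ($\x \notin \dom\map$ for \refToRule{wf-var}, the threshold $\map(\x) > 0$ for \refToRule{wf-corec}, membership in $\dom\mapEnv$), and here we are given $\x \in \dom\map$ with $\x$ reached as a \emph{non-repeated} node in $\varset$, both families of rules are applicable at precisely the same positions until the final node $\WF{\x}{\map'}$ resp.\ $\AT{\x}{\indx'}$ is reached.

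In more detail: for the direction $(1)\Rightarrow(2)$, I would do induction on the derivation of the path $\AT{\x}{\indx'}\premisestarfirst{\varset}\AT{\sv}{\indx}$. The base case is $\sv = \x$ with $\indx = \indx'$ and the empty path, where condition (2) holds with $\map' = \map$. For the inductive step, case-analyse the shape of $\sv$: if $\sv = \cons{\nv}{\sv''}$, then the first step of the $\AT{}$-path is by \refToRule{at-cons-n} (the index must be positive, else the variable $\x$ could not be reached further down), so $\AT{\x}{\indx'}\premisestarfirst{\varset}\AT{\sv''}{\indx-1}$; apply the IH and then prepend a \refToRule{wf-cons} step, using $\incrMap$, which adds $1$ to the index side and $1$ to $\map(\x)$, keeping the invariant. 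The cases $\sv = \tail{\sv''}$ and $\sv = \PW{\sv_1}{\pwop}{\sv_2}$ are analogous (for pointwise, the path must go through one of the two operands, and the map is unchanged). If $\sv = \y$ is a variable with $\y \neq \x$, then $\y \in \varset$ and is non-repeated, so the first step is by \refToRule{at-var} to $\AT{\mapEnv(\y)}{\indx}$; apply the IH with the reduced variable set and prepend a \refToRule{wf-var} step (adding $\y \mapsto 0$ to the map, which does not affect $\map(\x)$). The direction $(2)\Rightarrow(1)$ is completely symmetric, replaying the same case analysis on the $\WF{}$-path and building the $\AT{}$-path.

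The main obstacle I anticipate is not the arithmetic but making the statement of the invariant robust under the auxiliary map growing and shrinking: along the path, entries for the intermediate variables in $\varset$ are inserted (at value $0$) and the whole map is shifted up and down, so one must be careful that $\map'(\x)$ is well-defined throughout — i.e.\ that $\x$ stays in the domain of every intermediate map once it has been added, which it does since entries are never removed and $\x$ is either already present (here it is, as $\x\in\dom\map$) or gets added at a \refToRule{wf-var}/\refToRule{at-var} step and then only shifted. A secondary subtlety is that the $\premisestarfirst{\varset}$ notation forbids repeated variable nodes, so when I peel off a \refToRule{wf-var}/\refToRule{at-var} step for some $\y\in\varset$, the remaining path lives over $\varset\setminus\{\y\}$ (still not containing $\x$), which is exactly what the IH needs; I would state the induction over this refined set explicitly to avoid circularity. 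Everything else is a routine matching of the two rule systems step by step.
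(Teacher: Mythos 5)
Your proposal follows essentially the same route as the paper's proof: both implications by induction on the length of the path, with a case analysis on the last rule applied, maintaining the invariant $\map'(\x)=\map(\x)+\indx-\indx'$ through the parallel bookkeeping of \refToRule{wf-cons}/\refToRule{at-cons-n}, \refToRule{wf-tail}/\refToRule{at-tail}, etc. The only point your sketch glosses over is that the direction $(2)\Rightarrow(1)$ is not quite ``completely symmetric'' in the \refToRule{wf-tail} case: the induction hypothesis only yields \emph{some} indices $\indx',\indx$, and to prepend an \refToRule{at-tail} step you need $\indx>0$, which the paper secures by shifting both indices upward via \refToLemma{basics}-(\ref{iii}).
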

\begin{proof}\
\begin{description}
\item[1$\Rightarrow 2$]
The proof is by induction on the length of the path in $\AT{\x}{\indx'}\premisestarfirst{\varset}\AT{\sv}{\indx}$. 
\begin{description}
\item[Base] The length of the path is $0$, hence we have $\AT{\x}{\indx}\premisestarfirst{\emptyset}\AT{\x}{\indx}$. We also have  $\WF{\x}{\map}\premisestarfirst{\emptyset}\WF{\x}{\map}$, and $\map(\x)=\map(\x)+\indx-\indx$, as requested.
\item[Inductive step] By cases on the rule applied to derive $\AT{\sv}{\indx}$.
\begin{description}
\item[\refToRule{at-var}] We have $\AT{\y}{\indx}$, with $\y\neq\x$ since the length of the path is $>0$, and $\AT{\x}{\indx'}\premisestarfirst{\chi\setminus\{\y\}}\AT{\mapEnv(\y)}{\indx}$. 
Moreover, we can derive $\WF{\y}{\map}$ by rule \refToRule{wf-var}, and by inductive hypothesis we also have  $\WF{\x}{\map'}\premisestarfirst{\chi\setminus\{\y\}}\WF{\mapEnv(\y)}{\Update{\map}{\y}{0}}$, and $\map'(\x)=\Update{\map}{\y}{0}(\x)+\indx-\indx'$, hence we get the thesis.
\item[\refToRule{at-cons-0}] Empty case, since the derivation for $\AT{\cons{\nv}{\sv}}{0}$ does not contain a node $\AT{\x}{\indx'}$.
\item[\refToRule{at-cons}] We have $\AT{\cons{\nv}{\sv}}{\indx}$, and $\AT{\x}{\indx'}\premisestarfirst{\varset}\AT{\sv}{\indx-1}$. 
Moreover, we can derive $\WF{\cons{\nv}{\sv}}{\map}$ by rule \refToRule{wf-cons}, and by inductive hypothesis we also have  $\WF{\x}{\map'}\premisestarfirst{\varset}\WF{\sv}{\incrMap}$, with ${\map'(\x)=\incrMap(\x)+(\indx-1)-\indx'}$, hence we get the thesis.
\item[\refToRule{at-tail}] This case is symmetric to the previous one.
\item[\refToRule{at-pw}] We have $\AT{\PW{\sv_1}{\pwop}{\sv_2}}{\indx}$, and either $\AT{\x}{\indx'}\premisestarfirst{\varset}\AT{\sv_1}{\indx}$, or \linebreak ${\AT{\x}{\indx'}\premisestarfirst{\varset}\AT{\sv_2}{\indx}}$. Assume the first case holds, the other is analogous.
Moreover, we can derive $\WF{\PW{\sv_1}{\pwop}{\sv_2}}{\map}$ by rule \refToRule{wf-pw}, and by inductive hypothesis we also have  $\WF{\x}{\map'}\premisestarfirst{\varset}\WF{\sv_1}{\map}$, with $\map'(\x)=\map(\x)+\indx-\indx'$, hence we get the thesis.
\end{description}
\end{description}
\item[2$\Rightarrow$ 1] The proof is by induction on the length of the path in ${\WF{\x}{\map'}\premisestar\WF{\sv}{\map}}$.
\begin{description}
\item[Base] The length of the path is $0$, hence we have $\WF{\x}{\map}\premisestarfirst{\emptyset}\WF{\x}{\map}$. We also have, for an arbitrary $\indx$, $\AT{\x}{\indx}\premisestarfirst{\emptyset}\AT{\x}{\indx}$, and $\map(\x)=\map(\x)+\indx-\indx$, as requested. 
\item[Inductive step] By cases on the rule applied to derive $\WF{\sv}{\map}$.
\begin{description}
\item[\refToRule{wf-var}] We have $\WF{\y}{\map}$, with $\y\notin\dom{\map}$, $\y\neq\x$ since ${\x\in\dom{\map}}$, and $\WF{\x}{\map'}\premisestarfirst{\varset\setminus\{\y\}}\WF{\mapEnv(\y)}{\Update{\map}{\y}{0}}$.
By inductive hypothesis we have $\AT{\x}{\indx'}\premisestarfirst{\varset\setminus\{\y\}}\AT{\mapEnv(\y)}{\indx}$ for some $\indx',\indx$ such that $\map'(\x) = \map(\x)+\indx-\indx'$. 
Moreover, since $\y\in\dom{\mapEnv}$, $\AT{\mapEnv(\y)}{\indx}\premise\AT{\y}{\indx}$ by rule \refToRule{at-var}, hence we get $\AT{\x}{\indx'}\premisestarfirst{\chi}\AT{\y}{\indx}$.

\item[\refToRule{wf-corec}] Empty case, since the derivation for $\WF{\y}{\map}$ would not contain a node $\WF{\x}{\map}$.
\item[\refToRule{wf-fv}] Empty case, since the derivation for $\WF{\y}{\map}$ would not contain a node $\WF{\x}{\map}$.
\item[\refToRule{wf-cons}] We have $\WF{\cons{\nv}{\sv}}{\map}$, and $\WF{\x}{\map'}\premisestarfirst{\varset}\WF{\sv}{\incrMap}$. By inductive hypothesis we have $\AT{\x}{\indx'}\premisestarfirst{\varset}\AT{\sv}{\indx}$ for some $\indx',\indx$ such that $\map'(\x) = \incrMap(\x)+\indx-\indx'$. Moreover, $\AT{\sv}{\indx}\premise\AT{\cons{\nv}{\sv}}{\indx+1}$ by rule \refToRule{at-cons-n}, hence we get $\AT{\x}{\indx'}\premisestarfirst{\chi}\AT{\cons{\nv}{\sv}}{\indx+1}$ with $\map'(\x) = \map(\x)+\indx+1-\indx'$, as requested.
\item[\refToRule{wf-tail}] We have $\WF{\tail{\sv}}{\map}$, and $\WF{\x}{\map'}\premisestarfirst{\varset}\WF{\sv}{\decrMap}$. By inductive hypothesis we have $\AT{\x}{\indx'}\premisestarfirst{\varset}\AT{\sv}{\indx}$ for some $\indx',\indx$ such that $\map'(\x) = \decrMap(\x)+\indx-\indx'$. We can assume $\indx>0$ thanks to \refToLemma{basics}-(\ref{iii}). Hence, $\AT{\sv}{\indx}\premise\AT{\cons{\nv}{\sv}}{\indx-1}$ by rule \refToRule{at-tail}, hence we get $\AT{\x}{\indx'}\premisestarfirst{\chi}\AT{\tail{\sv}}{\indx-1}$ with $\map'(\x) = \map(\x)+\indx-1-\indx'$, as requested.
\item[\refToRule{wf-pw}] We have $\WF{\PW{\sv_1}{\pwop}{\sv_2}}{\map}$, and either $\WF{\x}{\map'}\premisestarfirst{\varset}\WF{\sv_1}{\map}$, or ${\WF{\x}{\map'}\premisestarfirst{\varset}\WF{\sv_2}{\map}}$. Assume the first case holds, the other is analogous.
By inductive hypothesis we  have  $\AT{\x}{\indx'}\premisestarfirst{\varset}\AT{\sv_1}{\indx}$, for some $\indx',\indx$ such that  $\map'(\x)=\map(\x)+\indx-\indx'$. Moreover, we can derive $\AT{\PW{\sv_1}{\pwop}{\sv_2}}{\indx}$ by rule \refToRule{at-pw},  hence we get the thesis.
\end{description}
\end{description}
\end{description}
\end{proof}

\begin{lemma}\label{lemma:first-occurrence} For $\x\not\in\dom{\map}$, the following conditions are equivalent:
\begin{enumerate}
\item $\AT{\x}{\indx'}\premisestarfirst{\varset}\AT{\sv}{\indx}$ for some $\indx',\indx$
\item  $\WF{\x}{\map'}\premisestarfirst{\varset}\WF{\sv}{\map}$ for some $\map'$ such that $\x\not\in\dom{\map'}$.
\end{enumerate}
\end{lemma}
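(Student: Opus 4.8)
The plan is to prove the two implications separately, each by induction on the length of the path, with a case analysis on the rule at the root of the path. Structurally this is the same argument used for \refToLemma{second-occurrence}, only lighter: since $\x$ has not been opened yet, there is no numeric invariant relating $\map'(\x)$ to $\map(\x)$ to carry along --- all that must be maintained is that $\x$ stays outside the domain of the auxiliary map, which is immediate because the only rule that enlarges that domain is \refToRule{wf-var}, and, as $\x\notin\varset$, it is never applied to $\x$ along the path.

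For $1\Rightarrow 2$ I would induct on the length of the path $\AT{\x}{\indx'}\premisestarfirst{\varset}\AT{\sv}{\indx}$. In the base case ($\sv=\x$, $\varset=\emptyset$, $\indx=\indx'$) I take $\map'=\map$, which satisfies $\x\notin\dom{\map'}$ by hypothesis. For the inductive step I case-split on the rule deriving $\AT{\sv}{\indx}$. The case \refToRule{at-cons-0} is vacuous, since the derivation of $\AT{\cons{\nv}{\sv}}{0}$ contains no node of shape $\AT{\x}{\_}$. For \refToRule{at-var} we have $\sv=\y$ with $\y\in\varset$, hence $\y\neq\x$; rule \refToRule{wf-var} gives $\WF{\mapEnv(\y)}{\Update{\map}{\y}{0}}\premise\WF{\y}{\map}$, and the induction hypothesis applied to $\mapEnv(\y)$ --- legitimate because $\x\notin\dom{\Update{\map}{\y}{0}}$ as $\y\neq\x$ --- supplies the remainder of the $\wf$-path. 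For \refToRule{at-cons-n}, \refToRule{at-tail} and \refToRule{at-pw} I would apply \refToRule{wf-cons}, \refToRule{wf-tail}, \refToRule{wf-pw} respectively and invoke the induction hypothesis on the relevant subterm with map $\incrMap$, $\decrMap$ or $\map$; in each case $\x$ remains outside the domain, since incrementing or decrementing does not change it.

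For $2\Rightarrow 1$ I would induct on the length of the path $\WF{\x}{\map'}\premisestarfirst{\varset}\WF{\sv}{\map}$, case-splitting on the rule deriving $\WF{\sv}{\map}$. The base case is dual to the one above (take any $\indx=\indx'$). The cases \refToRule{wf-corec} and \refToRule{wf-fv} are vacuous, being leaves with no descendant of shape $\WF{\x}{\_}$. For \refToRule{wf-var} with $\sv=\y$ we must have $\y\neq\x$ --- otherwise $\WF{\y}{\map}$ would already be a node $\WF{\x}{\map'}$ with $\x\notin\dom{\map'}$ and the path would have length $0$ --- so $\y\in\varset$; the induction hypothesis on $\mapEnv(\y)$ followed by one step of \refToRule{at-var} extends the $\atfun$-path. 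The cases \refToRule{wf-cons}, \refToRule{wf-tail} and \refToRule{wf-pw} are closed using \refToRule{at-cons-n}, \refToRule{at-tail} and \refToRule{at-pw}, where in the tail case I would invoke \refToLemma{basics}-(\ref{iii}) to shift indices so that the relevant index is strictly positive, exactly as in \refToLemma{second-occurrence}.

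The only step that needs genuine care --- and it is the very same subtlety already present in the proof of \refToLemma{second-occurrence} --- is discharging the side condition $\y\notin\dom{\map}$ of rule \refToRule{wf-var} when it is used in the $1\Rightarrow 2$ direction: this holds because the intermediate variable nodes along a $\premisestarfirst{\varset}$-path are pairwise distinct and, being first occurrences along that path, have not yet been added to $\map$. Everything else is routine propagation of the induction hypothesis through the matching rules for $\atfun$ and $\wf$.
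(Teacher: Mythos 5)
Your argument is correct and coincides with the paper's intent: the paper's entire proof of this lemma is ``Easy variant of the proof of \refToLemma{second-occurrence}'', and what you have written is precisely that variant --- the same two inductions on path length with the same rule-by-rule case analysis, with the numeric invariant $\map'(\x)=\map(\x)+\indx-\indx'$ replaced by the simpler invariant $\x\not\in\dom{\map'}$. Your closing remark about discharging the side condition of \refToRule{wf-var} is a subtlety the paper also leaves implicit in \refToLemma{second-occurrence}, so you are, if anything, slightly more explicit than the source.
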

\begin{proof}
Easy variant of the proof of \refToLemma{second-occurrence}.
\end{proof}

\begin{theorem}\label{theo:iff}
$\WF{\sv}{\emptyMap}$ is derivable iff, for all $\jndx$, $\AT{\sv}{\jndx}$ either has no derivation or a finite derivation.
\end{theorem}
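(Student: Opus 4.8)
The plan is to establish both directions by contraposition, reducing the statement, via the characterizations in \refToLemma{basics}, to an equivalence between the two ``bad-cycle'' conditions \refToRule{wf-stuck} and \refToRule{at-$\infty$}. Precisely: by \refToLemma{basics}-(\ref{i}), $\WF{\sv}{\emptyMap}$ is \emph{not} derivable exactly when \refToRule{wf-stuck} holds; and by \refToLemma{basics}-(\ref{ii}), recalling that the rules defining $\AT{\cdot}{\cdot}$ are deterministic (at most one applies to any $\AT{\sv}{\indx}$, so there is at most one derivation), the property ``for all $\jndx$, $\AT{\sv}{\jndx}$ has no derivation or a finite derivation'' fails exactly when \refToRule{at-$\infty$} holds for some $\jndx$. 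Hence it suffices to show that \refToRule{wf-stuck} holds (for some data) if and only if \refToRule{at-$\infty$} holds for some $\jndx$. The guiding intuition is that both conditions assert the existence of a variable $\x\in\dom{\mapEnv}$ sitting on a cyclic derivation path along which at least as many \emph{tail} operators as \emph{constructors} are traversed --- a net decrement of the index for $\AT{\cdot}{\cdot}$, equivalently a non-positive final counter for $\WF{\cdot}{\cdot}$ --- and \refToLemma{second-occurrence} and \refToLemma{first-occurrence} are exactly the dictionary translating one kind of path into the other.

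For the implication \refToRule{at-$\infty$} $\Rightarrow$ \refToRule{wf-stuck}, I would take a path $\AT{\x}{\indx+k}\premisestarfirst{\varset'}\AT{\mapEnv(\x)}{\indx}\premise\AT{\x}{\indx}\premisestarfirst{\varset}\AT{\sv}{\jndx}$ with $\x\in\dom{\mapEnv}$ and $\indx,k\geq0$, and split it at the \refToRule{at-var} step for the first occurrence of $\x$. Applying \refToLemma{first-occurrence} ($1\Rightarrow2$) to the lower segment yields $\WF{\x}{\map_1}\premisestarfirst{\varset}\WF{\sv}{\emptyMap}$ with $\x\notin\dom{\map_1}$; since $\x\in\dom{\mapEnv}$, rule \refToRule{wf-var} is applicable to $\WF{\x}{\map_1}$ and has premise $\WF{\mapEnv(\x)}{\Update{\map_1}{\x}{0}}$. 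Applying \refToLemma{second-occurrence} ($1\Rightarrow2$) to the upper segment yields $\WF{\x}{\map_2}\premisestarfirst{\varset'}\WF{\mapEnv(\x)}{\Update{\map_1}{\x}{0}}$ with $\map_2(\x)=\Update{\map_1}{\x}{0}(\x)+\indx-(\indx+k)=-k\leq0$. Concatenating these through the \refToRule{wf-var} step produces precisely the configuration required by \refToRule{wf-stuck} (with $\map_1$ in the role of its ``$\map$'' and $\map_2$ in the role of its ``$\map'$'').

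The converse, \refToRule{wf-stuck} $\Rightarrow$ \refToRule{at-$\infty$}, is symmetric in structure: \refToLemma{first-occurrence} ($2\Rightarrow1$) turns the lower segment of the \refToRule{wf-stuck} path into some $\AT{\x}{\indx'}\premisestarfirst{\varset}\AT{\sv}{\jndx'}$, and \refToLemma{second-occurrence} ($2\Rightarrow1$) turns its upper segment (whose two endpoint counters are $k\leq0$ at the top and $0$ at the bottom) into some $\AT{\x}{\indx''}\premisestarfirst{\varset'}\AT{\mapEnv(\x)}{\jndx''}$ with $\jndx''-\indx''=k$, i.e.\ $\indx''-\jndx''=-k\geq0$. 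The delicate point --- and the step I expect to be the main obstacle --- is that these two $\atfun$-paths must be glued at an \refToRule{at-var} step on $\x$, which forces the bottom index $\jndx''$ of the upper path to coincide with the index $\indx'$ at which $\x$ first occurs, whereas the indices delivered by the lemmas need not match. I would resolve this with a double application of \refToLemma{basics}-(\ref{iii}): first shift the lower path uniformly by some $m\geq0$ so that $\indx'+m\geq\jndx''$, then shift the upper path by $n=(\indx'+m)-\jndx''\geq0$, so that it reads $\AT{\x}{\indx''+n}\premisestarfirst{\varset'}\AT{\mapEnv(\x)}{\indx'+m}$ with $\indx''+n=(\indx'+m)+(\indx''-\jndx'')=(\indx'+m)-k\geq\indx'+m$. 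Writing $q:=\indx'+m$ and noting $-k\geq0$, the concatenation of the shifted upper path, the \refToRule{at-var} step on $\x$, and the shifted lower path has exactly the shape of \refToRule{at-$\infty$} (its index being $q$, its increment $-k$, its root $\AT{\sv}{\jndx'+m}$), which closes the argument.
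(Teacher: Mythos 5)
Your proof is correct and follows essentially the same route as the paper's: contraposition via \refToLemma{basics}-(\ref{i}) and (\ref{ii}), translation of the two path segments by \refToLemma{first-occurrence} and \refToLemma{second-occurrence}, gluing at the \refToRule{wf-var}/\refToRule{at-var} step, and index adjustment via \refToLemma{basics}-(\ref{iii}) (where the paper uses a two-case analysis on which index is larger, while you shift both paths uniformly --- the same idea).
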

\begin{proof} We prove that $\AT{\sv}{\jndx}$ has an infinite derivation for some $\jndx$ iff $\WF{\sv}{\emptyMap}$ has no derivation.
\begin{description}
\item[$\Rightarrow$] By \refToLemma{basics}-(\ref{ii}),we have that the following condition holds:
\begin{quote}
\begin{tabular}{ll}
\refToRule{at-$\infty$}&$\AT{\x}{\indx+k}\premisestarfirst{\varset'}\AT{\mapEnv(\x)}{\indx}\premise\AT{\x}{\indx}\premisestarfirst{\varset}\AT{\sv}{\jndx}$\\
&for some $\x\in\dom{\mapEnv}$, $\varset',\varset$, and $\indx, k \geq 0$.
\end{tabular}
\end{quote}
Then, starting from the right, by \refToLemma{first-occurrence} we have ${\WF{\x}{\map}\premisestarfirst{\varset}\WF{\sv}{\emptyMap}}$ for some $\map$ such that $\x\not\in\dom{\map}$; by rule \refToRule{wf-var} we have \linebreak ${\WF{\mapEnv(\x)}{\Update{\map}{\x}{0}}\premise\WF{\x}{\map}}$, and finally by \refToLemma{second-occurrence} we have:\\ 
\begin{tabular}{ll}
\refToRule{wf-stuck}&$\WF{\x}{\map'}\premisestarfirst{\varset'} \WF{\mapEnv(\x)}{\Update{\map}{\x}{0}}\premise\WF{\x}{\map}\premisestarfirst{\varset}\WF{\sv}{\emptyMap}$\\
&for some $\x\in\dom{\mapEnv}$, $\varset',\varset$, and $\map',\map$ s.t.\ $\x\not\in\dom{\map}, \map'(\x)=k\leq 0$.
\end{tabular}
\\ 
hence we get the thesis.
\item[$\Leftarrow$] By \refToLemma{basics}-(\ref{i}),we have that the condition \refToRule{wf-stuck} above holds.
Then, starting from the left, by \refToLemma{second-occurrence} we have $\AT{\x}{\indx'}\premisestarfirst{\varset'}\AT{\mapEnv(\x)}{\indx}$ for some $\indx',\indx$ such that $\indx-\indx'=k\leq 0$; 
by rule \refToRule{at-var} we have $\AT{\mapEnv(\x)}{\indx}\premise\AT{\x}{\indx}$, and by \refToLemma{first-occurrence} we have $\AT{\x}{\jndx'}\premisestarfirst{\varset}\AT{\sv}{\jndx}$ for some $\jndx',\jndx$.
If $\indx=\jndx'+h$, $h \geq 0$, then by \refToLemma{basics}-(\ref{iii}) we have\\
\centerline{
$\AT{\x}{\indx+k}\premisestarfirst{\varset'}\AT{\mapEnv(\x)}{\indx}\premise\AT{\x}{\indx}\premisestarfirst{\varset}\AT{\sv}{\jndx+h}$}
If $\jndx'=\indx+h$, $h \geq 0$, then by \refToLemma{basics}-(\ref{iii}) we have\\
\centerline{$\AT{\x}{\indx+k+h}\premisestarfirst{\varset'}\AT{\mapEnv(\x)}{\indx}\premise\AT{\x}{\indx+h}\premisestarfirst{\varset}\AT{\sv}{\jndx}$.}
In both cases, the derivation of $\AT{\sv}{\jndx}$ is infinite. 
\end{description}
\end{proof}

\section{Related and future work}\label{sect:conclu}
As  mentioned in \refToSection{intro}, our approach extends regular corecursion, where the semantics keeps track of method/function calls. Regular corecursion originated from \emph{co-SLD resolution} \cite{Simon06,SimonBMG07,Ancona13,AnconaDovier15}, where already considered goals (up to unification), called \emph{coinductive hypotheses}, are considered successful. Language constructs that support this programming style have also been proposed in the functional \cite{Jeannin17} and object-oriented \cite{AnconaZ12,AnconaBDZ20} paradigm. 

There have been a few attempts of extending the expressive power of regular corecursion. Notably, \emph{structural resolution} \cite{KomendantskayaJS16,KomendantskayaPS16} is a proposed operational semantics for logic programming where infinite derivations that cannot be built in finite time are generated lazily, and only partial answers are shown. Another approach is the work on infinite trees \cite{Courcelle83}, where Courcelle introduces algebraic trees and equations as generalizations of regular ones.

\DA{For what concerns the operators considered in the calculus and some examples, our main sources of inspiration have been the works of
Rutten \cite{Rutten05}, where a coinductive calculus of streams of real numbers is defined,
and Hinze \cite{Hinze10}, where a calculus of generic streams is defined in a constructive way and implemented in Haskell.}

\DA{The problem of ensuring well-defined corecursive definitions has been also considered in the context of
  type theory and proof assistants. We have shown in \refToSection{wf} that simple guarded definitions \cite{Coquand93} do not work properly
  in case values are allowed to contain non constuctors as the tail operator; a
  more complex approach based on a type system has been proposed by Sacchini \cite{Sacchini13} for an extension of the calculus of constructions
  which is more expressive than what considered here; however, as opposed to what happens with the judgment $\wf$ defined in \refToSection{wf},
  corecursive  calls  to  the  result  of  an  application of tail are never well-typed even in case of well-defined streams as happens
  for the definition of \lstinline{fib} as given in \refToSection{examples}.}
  
\PB{Lastly, D'Angelo et al. presented LOLA \cite{LOLA05}, a specification language for runtime monitoring that manipulates streams. The general idea behind the framework is to generate a set of output streams,
starting from a given set of input streams. 
The main difference with respect to our work is that LOLA only allows streams with a finite number of elements.
In this framework, well-formedness is checked by relying on a dependency graph, which keeps track of relations between the processed streams. The vertices of this graph are the streams, while the edges represent the dependencies between them. Each edge is weighted with a value $\omega$ to point the fact that a stream depends on another one shifted by $\omega$ positions. Then, the well-formedness constraint is that each closed-walk inside the graph must have a total weight different from $0$. These syntactic constraints appear to be very similar to the approach we used for predicate $\wf$.}

Our main technical result is \refToTheorem{iff}, stating that passing the \DA{well-definedness} check performed at runtime for each function call is necessary and sufficient to prevent non-termination in accessing elements in the resulting stream at an arbitrary index. In future work, we plan to also prove soundness of the operational \DA{well-definedness} with respect to its abstract definition. Completeness does not hold, as shown by the example
\lstinline!zeros() = repeat(0) [*] zeros()! 
which is not \DA{well-defined} operationally, but admits as unique solution the stream of all zeros.
On the other hand, the simplest operational characterization of equivalence of stream values is syntactic equivalence. This works for all the examples presented in this paper, but is, again, not complete with respect to the abstract definition, as illustrated below:

\begin{lstlisting}
first(s) = s(0):first(s) // works with syntactic equivalence
first2(s) = s(0):first2(s(0):s^) // does not work
\end{lstlisting}

Indeed, we get an infinite derivation for \lstinline!first2(repeat(1))!
  with call traces of increasing shape
  ${\{\mbox{\lstinline!first2($x$)!}\mapsto \y_1,\mbox{\lstinline!first2(1:$x$^)!}\mapsto \y_2,\mbox{\lstinline!first2(1:(1:$x$^)^)!}\mapsto \y_3,\ldots\}}$ in the environment $\{\x\mapsto\cons{1}{\x}\}$. In future work we plan to investigate more expressive operational characterizations of equivalence.


\bibliographystyle{plain}
\bibliography{bib}
\newpage
\appendix
\section{Examples of derivations}

\begin{figure}
\begin{small}
$\begin{array}{l}
\f()=\g()\\
g()=1:\f()
\\[6ex]
\NamedRule{invk}
{\NamedRule{invk}
{\NamedRule{cons}
{\NamedRule{val}
{}
{\opsem{1}{\emptyset}{\{\f()\mapsto\x,\g()\mapsto\y\}}{1}{\emptyset}}
{}
\BigSpace
\NamedRule{corec}
{}
{\opsem{\f()}{\emptyset}{\{\f()\mapsto\x,\g()\mapsto\y\}}{\x}{\emptyset}}
{}}
{\opsem{1:\f()}{\emptyset}{\{\f()\mapsto\x,\g()\mapsto\y\}}{1:\x}{\emptyset}}
{}}
{\opsem{\g()}{\emptyset}{\{\f()\mapsto\x\}}{\y}{\{\y\mapsto1:\x\}}}
{}}
{\opsem{\f()}{\emptyset}{\emptyset}{\x}{\{\x\mapsto\y,\y\mapsto1:\x\}}}
{}
\end{array}$
\end{small}
\caption{Example of derivation}\label{fig:derivation1}
\end{figure}

\begin{figure}
\begin{small}
$\begin{array}{l}
\f()=\g(2:\f())\\
g(s)=1:s
\\[6ex]
\NamedRule{invk}
{\NamedRule{args}
{\NamedRule{cons}
{\NamedRule{val}
{}
{\opsem{2}{\emptyset}{\{\f()\mapsto\x\}}{2}{\emptyset}}
{}
\BigSpace
\NamedRule{corec}
{}
{\opsem{\f()}{\emptyset}{\{\f()\mapsto\x\}}{\x}{\emptyset}}
{}}
{\opsem{2:\f()}{\emptyset}{\{\f()\mapsto\x\}}{2:\x}{\emptyset}}
{}}
{\opsem{\g(2:\f())}{\emptyset}{\{\f()\mapsto\x\}}{\y}{\{\y\mapsto1:2:\x\}}}
{}
\BigSpace
T_1}
{\opsem{\f()}{\emptyset}{\emptyset}{\x}{\{\x\mapsto\y,\y\mapsto1:2:\x\}}}
{}
\\[8ex]
T_1=
\NamedRule{invk}
{\NamedRule{val}
{}
{\opsem{1:2:\x}{\emptyset}{\{\g(2:\x)\mapsto\y,\f()\mapsto\x\}}{1:2:\x}{\emptyset}}
{}}
{\opsem{\g(2:\x)}{\emptyset}{\{\f()\mapsto\x\}}{\y}{\{\y\mapsto1:2:\x\}}}
{}
\end{array}$
\end{small}
\caption{Example of derivation}\label{fig:derivation2}
\end{figure}

\begin{figure}
\begin{small}
$\begin{array}{l}
\undef()=(\undef()(0)):\undef()
\\[6ex]
\NamedRule{invk}
{\NamedRule{cons}
{T_1
\BigSpace
\NamedRule{corec}
{}
{\opsem{\undef()}{\emptyset}{\{\undef()\mapsto\x\}}{\x}{\emptyset}}
{}}
{\opsem{(\undef()(0)):\undef()}{\emptyset}{\{\undef()\mapsto\x\}}{?}{?}}
{}
}
{\opsem{\undef()}{\emptyset}{\emptyset}{?}{?}}
{}
\\[8ex]
T_1=
\NamedRule{at}
{\NamedRule{corec}
{}
{\opsem{\undef()}{\emptyset}{\{\undef()\mapsto\x\}}{\x}{\emptyset}}
{}
\BigSpace
\NamedRule{val}
{}
{\opsem{0}{\emptyset}{\{\undef()\mapsto\x\}}{0}{\emptyset}}
{}}
{\opsem{\undef()(0)}{\emptyset}{\{\undef()\mapsto\x\}}{?}{?}}
{\At{\emptyset}{\x}{0}{?}}
\end{array}$
\end{small}
\caption{Example of stuck derivation}\label{fig:stuck_derivation}
\end{figure}

\begin{figure}
\begin{small}
$\begin{array}{l}
\NamedRule{wf-var}
{\NamedRule{wf-cons}
{\NamedRule{wf-cons}
{\NamedRule{wf-tail}
{\NamedRule{wf-cons}
{\NamedRule{wf-tail}
{\NamedRule{wf-corec}
{}
{\WF{\x}{\{x\mapsto 1\}}}
{}}
{\WF{\tail{\x}}{\{x \mapsto 2\}}}
{}}
{\WF{\cons{2}{\tail{\x}}}{\{x\mapsto1\}}}
{}}
{\WF{\tail{(\cons{2}{\tail{\x}})}}{\{x\mapsto 2\}}}
{}}
{\WF{\cons{1}{\tail{(\cons{2}{\tail{\x}})}}}{\{x\mapsto 1\}}}
{}}
{\WF{\cons{0}{\cons{1}{\tail{(\cons{2}{\tail{\x}})}}}}{\{x\mapsto0\}}}
{}}
{\WF{\x}{\emptyset}}
{}
\end{array}$
\\[8ex]
$\begin{array}{l}
\NamedRule{at-var}
{\NamedRule{at-cons-n}
{\NamedRule{at-cons-n}
{\NamedRule{at-tail}
{\NamedRule{at-cons-n}
{\NamedRule{at-tail}
{\NamedRule{at-var}
{\begin{array}{c}\NamedRule{at-cons-0}{}{\At{\mapEnv}{\cons{1}{\tail{(\cons{2}{\tail{\x}})}}}{0}{1}}{}\\\vdots\end{array}}
{\At{\mapEnv}{\x}{\indx-1}{{1}}}
{}}
{\At{\mapEnv}{\tail{\x}}{\indx-2}{{1}}}
{}}
{\At{\mapEnv}{2:\tail{\x}}{\indx-1}{{1}}}
{}}
{\At{\mapEnv}{\tail{(2:\tail{\x})}}{\indx-2}{{1}}}
{}}
{\At{\mapEnv}{1:\tail{(2:\tail{\x})}}{\indx-1}{{1}}}
{}}
{\At{\mapEnv}{0:1:\tail{(2:\tail{\x})}}{\indx}{{1}}}
{}}
{\At{\mapEnv}{\x}{\indx}{{1}}}
{}
\end{array}$
\end{small}
\caption{Reduction of $\WF{\x}{\emptyset}$ and $\At{\mapEnv}{\x}{\indx}{{1}}$ with $\mapEnv=\{\x\mapsto 0:1:\tail{(2:\tail{\x})}\}$ and {$\indx>1$}.}\label{fig:ex2}
\end{figure}

\end{document}